\def\bbl@set@language#1{%
  \edef\languagename{%
    \ifnum\escapechar=\expandafter`\string#1\@empty
    \else\string#1\@empty\fi}%
  \@ifundefined{babel@language@alias@\languagename}{}{%
    \edef\languagename{\@nameuse{babel@language@alias@\languagename}}%
  }%
  \select@language{\languagename}%
  \expandafter\ifx\csname date\languagename\endcsname\relax\else
    \if@filesw
      \protected@write\@auxout{}{\string\select@language{\languagename}}%
      \bbl@for\bbl@tempa\BabelContentsFiles{%
        \addtocontents{\bbl@tempa}{\xstring\select@language{\languagename}}}%
      \bbl@usehooks{write}{}%
    \fi
  \fi}
\newcommand{\DeclareLanguageAlias}[2]{%
  \global\@namedef{babel@language@alias@#1}{#2}%
}
\newtheorem{conjecture}{Conjecture}
\newtheorem{lemma}{Lemma}
\newtcolorbox[auto counter]{example}[1][]{enhanced,fonttitle=\sffamily\bfseries\large,valign=center
,title={Example \thetcbcounter},label=#1,left=2pt,right=2pt}
\crefname{tcb@cnt@example}{Ex.}{Exs.}
\newenvironment{proof*}[1][\proofname]{%
  
  \begin{proof}[#1]}{\end{proof}}
\newcommand{\identity}{\mathbb{1}}
\renewcommand{\epsilon}{\varepsilon}
\newcounter{nodenumber}
\newcounter{arraycard}
\begin{document}

\title{Matrix Inversion by Quantum Walk}
\date{\today}
\author{Alastair \surname{Kay}}\email{alastair.kay@rhul.ac.uk}
\affiliation{Royal Holloway University of London, Egham, Surrey, TW20 0EX, UK}

\author{Christino Tamon}
\email{ctamon@clarkson.edu}
\affiliation{Department of Computer Science, Clarkson University, Potsdam, New York, USA 13699-5815.}
\begin{abstract}
The HHL algorithm for matrix inversion is a landmark algorithm in quantum computation. Its ability to produce a state $\ket{x}$ that is the solution of $Ax=b$, given the input state $\ket{b}$, is envisaged to have diverse applications. In this paper, we substantially simplify the algorithm, originally formed of a complex sequence of phase estimations, amplitude amplifications and Hamiltonian simulations, by replacing the phase estimations with a continuous time quantum walk. The key technique is the use of weak couplings to access the matrix inversion embedded in perturbation theory.
\end{abstract}
\maketitle

The Harrow-Hassidim-Lloyd (HHL) algorithm \cite{harrow2009} is one of the most celebrated results in quantum computing, marking a pivotal step toward quantum algorithms that outperform their classical counterparts. Intended to deliver a solution for systems of linear equations exponentially faster (subject to certain caveats), HHL is a fundamental element of quantum machine learning \cite{rebentrost2014}, solving differential equations \cite{leyton2008,berry2014,montanaro2016}, quantum data fitting \cite{wiebe2012}, and quantum chemistry \cite{baskaran2023}, and has also inspired exciting dequantization strategies \cite{chia2020} providing quantum-inspired classical strategies with improved running times.

Despite its vast potential, the original HHL algorithm is difficult to implement, comprising a complex sequence of phase estimations, amplitude amplifications and Hamiltonian simulations that are largely prohibitive for implementation on near-term, very limited, devices (aside from token attempts \cite{cai2013}). Improvements in asymptotic performance \cite{ambainis2012,childs2017} only exacerbate this. In this work, we present a streamlined variant, eliminating many of these costly components and simplifying the error analysis. Our new approach relies solely on Hamiltonian simulation, generating a quantum walk \cite{farhi1998,farhi2008,ambainis2007}. These have been invaluable in finding speed-ups in various contexts such as the glued-trees algorithm \cite{childs2003}, including experimental implementation \cite{shi2020}, and spatial search  \cite{childs2004}. Structural modifications of the simulated Hamiltonian directly control the error parameters.

For a (possibly non-square) complex matrix $A$, the HHL algorithm \cite{harrow2009} aims to tackle the problem of solving
$
Ax=b.
$
Such a proposition is impractical for a quantum computer as, even if one could prepare a state $\ket{x}$, reading it out would require unreasonably long. Instead, it focusses simply on producing the normalised state $\ket{x}$, having been provided with the state $\ket{b}$. The algorithm is founded upon certain key assumptions, namely, the ability to prepare the initial state $\ket{b}$ and the ability to efficiently implement a Hamiltonian evolution \cite{aaronson2015}. It also requires that the range of singular values of $A$ is bounded between (non-zero) $\frac{1}{\kappa}$ and $1$, where $\kappa$ is known as the condition number of the matrix $A$. We rely on the exact same assumptions. If we allow $A$ to be any arbitrary matrix, \cite{harrow2009} showed that the problem of creating $\ket{x}$ is BQP-hard: at least as hard as any problem that is efficiently solvable on a quantum computer. However, the algorithm made the further assumption that $A$ was sparse. This case has since been dequantized \cite{chia2020}, giving an efficient classical variant provided assumptions about data access can be fulfilled. While the HHL algorithm readily generalises to allow the evaluation of an observable of any state $\ket{x}=f(A)\ket{b}$, we focus entirely on the primary use case of $f(y)=y^{-1}$. Extensions, such as via linear combinations of unitaries \cite{childs2012} or the techniques from the quantum singular value transformation \cite{gilyen2019}, will be possible, but we have no new development. 

Our main insight is that by taking a matrix $H$ into which $A$ is embedded with the promise that all the eigenvalues of $H$ are at least $\frac{1}{\kappa}$ away from 0, then if we couple this using a perturbation $V$ to a new system of close to 0 energy,  i.e.\ $H\rightarrow H\oplus H_0+V$ where $\|V\|\ll\frac{1}{\kappa}$, then we must analyse $H_0$ under degenerate perturbation theory, finding that gaps open up relating to matrix elements of $H^{-1}$. Thus, $H^{-1}$ determines the evolution of states initially supported on the $H_0$ subsystem. The benefit of our approach is in the remarkable simplicity of implementation, removing the need for many repeated iterations of phase estimation. The basic version that we present first matches the asymptotic performance of the original version of the HHL algorithm. By introducing (mild) additional complexity in \cref{sec:better_scaling}, we can match more recent improvements in terms of the scaling with the condition number and the accuracy \cite{ambainis2012,childs2017}.

\section{Main System}

Imagine that we are given an $A$ which we need to invert. This has a singular value decomposition $\{\lambda_n,\ket{\lambda_n},\ket{\eta_n}\}$, i.e.\ $A\ket{\lambda_n}=\lambda_n\ket{\eta_n}$, and we are promised that $A$ has a condition number of $\kappa$, meaning that $\lambda_n\kappa>1$ for all $n$. We assume that $A$ is sparse, which will allow us to invoke a Hamiltonian simulation algorithm such as \cite{berry2015a}.

Our aim is to solve the linear problem $Ax=b$: given a state $\ket{b}$, we produce a normalised version of the state
\begin{equation}\label{eq:target}
\ket{x}=\sum_n\frac{1}{\lambda_n}\ket{\lambda_n}\braket{\eta_n}{b}.
\end{equation}
Since $A$ may not be square, let alone Hermitian, it is necessary to embed it into a larger matrix for implementation as a Hamiltonian. We choose
$$
H=\begin{bmatrix}
0 & \gamma\identity & 0 & 0 \\
\gamma\identity & 0 & A & 0 \\
0 & A^\dagger & 0 & \gamma\identity \\
0 & 0 & \gamma\identity & 0
\end{bmatrix}
$$
where $\gamma$ is a small parameter. $H$ is readily implemented using simulation techniques such as \cite{berry2015a} since its sparsity of only 1 greater than that of $A$. 

The plan is to initialise a state $\begin{bmatrix} \bra{b} & 0 & 0 & 0 \end{bmatrix}^\dagger=\ket{1}\otimes\ket{b}$, evolve it under $H$ for a suitable time $t$, and post-select on a state having arrived in the final subsystem,
$$
\ket{\psi}=
(\bra{4}\otimes\identity)e^{-iHt}(\ket{1}\otimes\ket{b}).
$$

\subsection{Analysis}

In order to analyse the evolution of $H$, we decompose the Hamiltonian into a series of 4-dimensional subsystems spanned by the states
\begin{align*}
\ket{n_{1\pm}}&=\frac{1}{\sqrt{2}}\left(\ket{1}\ket{\eta_n}\pm\ket{4}\ket{\lambda_n}\right) \\
\ket{n_{2\pm}}&=\frac{1}{\sqrt{2}}\left(\ket{2}\ket{\eta_n}\pm\ket{3}\ket{\lambda_n}\right).
\end{align*}
The action of $H$ on these states is closed, and is represented by a $4\times 4$ matrix
$$
h_n=\begin{bmatrix}
0 & \gamma & 0 & 0 \\
\gamma & \lambda_n & 0 & 0 \\
0 & 0 & 0 & \gamma \\
0 & 0 & \gamma & -\lambda_n
\end{bmatrix}=h_{n+}\oplus h_{n-}.
$$
We can now express our final evolution in terms of evolution under the different $h_n$:
\begin{align}\label{eq:out}
\ket{\psi}&=\sum_n\ket{\lambda_n}\braket{\eta_n}{b}\alpha_n \\
\alpha_n&=\frac12(\bra{n_{1+}}e^{-ih_nt}\ket{n_{1+}}-\bra{n_{1-}}e^{-ih_nt}\ket{n_{1-}}).\nonumber
\end{align}
Comparing \cref{eq:target,eq:out}, our target is simply to choose $\gamma,t$ such that $\alpha_n\propto\frac{1}{\lambda_n}$ to leading order.

If we take $\theta_n=\lambda_n\sqrt{1+\frac{\gamma^2}{\lambda_n^2}}$, then
\begin{multline}\label{eq:exact}
\alpha_n\propto\left(1+\frac{\lambda_n}{\theta_n}\right)\sin\left(\theta_nt\left(1-\frac{\lambda_n}{\theta_n}\right)\right)\\-\left(1-\frac{\lambda_n}{\theta_n}\right)\sin\left(\theta_nt\left(1+\frac{\lambda_n}{\theta_n}\right)\right).
\end{multline}
We choose $\gamma$ such that $\frac{\gamma}{\lambda_n}<\kappa\gamma\ll1$, allowing us to perform expansions for small $\frac{\gamma}{\lambda_n}$. In particular, $1-\frac{\lambda_n}{\theta_n}=\frac{\gamma^2}{2\lambda_n^2}+O(\gamma^4\kappa^4)$. We consider evolution over a time $t\sim\frac{1}{\gamma}$, meaning that we cannot expand terms like $\sin(\theta_nt)$, but simply bound them by 1. However, we can expand
$$
\sin\left(\theta_nt\left(1-\frac{\lambda_n}{\theta_n}\right)\right)\approx \frac{\theta_nt\gamma^2}{\lambda_n^2}\approx\frac{\gamma}{\lambda_n}.
$$
Tracking the approximations more carefully, we have
$$
\alpha_n\propto\frac{\gamma}{\lambda_n}+O\left(\gamma^2\kappa^2\right).
$$


Of course, the state $\ket{\psi}$ is not properly normalised, indicating that the protocol only succeeds with some (small) probability, and requires many repetitions. The norm of $\ket{\psi}$ is bounded by $\order{\gamma}$. Thus, in the worst case, the measurement only succeeds with probability $\order{\gamma^2}$. The state we get on outcome, $\ket{\psi}$, compared to the (normalised) target state $\ket{x}$ satisfies
$$
\|\ket{x}-\ket{\psi}\|\sim\gamma\kappa^2.
$$
Thus, we select $\gamma=\frac{\delta}{\kappa^2}$ to give us direct control of the accuracy $\delta$, while also ensuring $\gamma\kappa\ll 1$.

\begin{figure}[b]
\begin{adjustbox}{max width=0.45\textwidth}
\begin{tikzpicture}[darkstyle/.style={circle,draw,fill=black,minimum size=1cm},shaded/.style={darkstyle,preaction={fill, white},pattern={Lines[
                  distance=2mm,
                  angle=45,
                  line width=0.7mm
                 ]},
        pattern color=black}]
\draw [line width=0.6mm] (4,-2) node [darkstyle,fill=white] {} -- (4,0) node [darkstyle] {} node [midway,anchor=west] {$\delta$} -- (2,0) node [darkstyle] {} node [midway,anchor=north] {$\sqrt{7}$} -- (0,0) node [darkstyle] {} node [midway,anchor=north] {$\sqrt{12}$} -- (-2,0) node [darkstyle] {} node [midway,anchor=north] {$\sqrt{15}$} -- (-2,2) node [darkstyle] {} node [midway,anchor=east] {$4$} -- (0,2) node [darkstyle] {} node [midway,anchor=south] {$\sqrt{15}$} -- (2,2) node [darkstyle] {} node [midway,anchor=south] {$\sqrt{12}$} -- (4,2) node [darkstyle] {} node [midway,anchor=south] {$\sqrt{7}$} -- (4,4) node [shaded] {} node [midway,anchor=west] {$\delta$};
\draw [line width=0.6mm] (2,0) -- (2,-2) node [shaded] {}  node [midway,anchor=west] {$\delta$} (2,2) -- (2,4) node [darkstyle,fill=white] {} node [midway,anchor=west] {$\delta$};
\draw [line width=0.6mm] (-2,0) -- (-2,-2) node [shaded] {} node [midway,anchor=west] {$\delta$} (-2,2) -- (-2,4) node [darkstyle,fill=white] {}  node [midway,anchor=west] {$\delta$};
\draw [line width=0.6mm] (0,0) -- (0,-2) node [darkstyle,fill=white] {} node [midway,anchor=west] {$\delta$} (0,2) -- (0,4) node [shaded] {} node [midway,anchor=west] {$\delta$};
\end{tikzpicture}

\end{adjustbox}
  \caption{Network of qubits coupled by an exchange coupling of the indicated strengths and implementing \cref{ex:main} within its single excitation subspace. The state $\ket{b}$ is initialised across the open circles, and $\ket{x}$ is received on the hatched circles.}\label{fig:network}  
\end{figure}
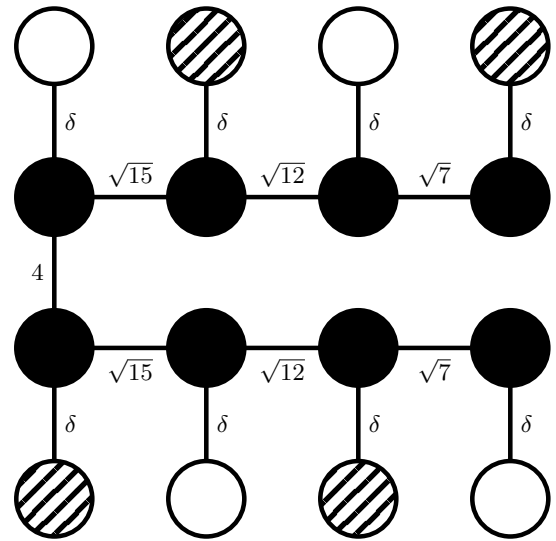

The protocol must be repeated $\order{\frac{\kappa^4}{\delta^2}}$ times. However, as in \cite{harrow2009}, we can replace this measure and repeat protocol with amplitude amplification \cite{grover1998,brassard2002} in order to run for only time $\order{\frac{\kappa^2T}{\delta}}$ where $T$ is the time required for the Hamiltonian simulation. This is the same accuracy, as a function of $\kappa$, as the original version of HHL \cite{harrow2009,ambainis2012}, although this has since been improved \cite{ambainis2012}.

\begin{example}[ex:main]
We wish to solve the problem
$$
\begin{bmatrix}
\sqrt{7} & 0 & 0 & 0 \\
 2 \sqrt{3} & \sqrt{15} & 0 & 0 \\
 0 & 4 & \sqrt{15} & 0 \\
 0 & 0 & 2 \sqrt{3} & \sqrt{7}
\end{bmatrix}x=\begin{bmatrix}
\sqrt{21} \\ 6 \\ 0 \\ \sqrt{35}
\end{bmatrix}.
$$
We set up the Hamiltonian and (normalized) initial state as specified \cite{kay2025b}, evolving the Hamiltonian for a time $\delta^{-1}$ and taking $\delta=0.01$ ($\kappa=1$). Post-selecting on the final block, we get a state
$$
\ket{\psi}=\begin{bmatrix} 0.6117 \\  -0.00076 \\ -0.0008 \\ 0.7911\end{bmatrix}.
$$
The renormalised version of the true answer is
$$
\ket{x}=\frac{1}{\sqrt{8}}\begin{bmatrix}
\sqrt{3} \\ 0 \\ 0 \\ \sqrt{5}
\end{bmatrix}\approx\begin{bmatrix}
0.6124 \\ 0 \\ 0 \\0.7906
\end{bmatrix},
$$
achieving the anticipated $\order{\delta}$ accuracy.

This instance can be implemented using just four qubits, initialised as $\ket{00}\ket{b}$ and arriving as $\ket{11}\ket{x}$ after post-selecting on the first two qubits being $\ket{11}$. All the $\delta$-strength terms are created by a single Pauli term $\delta\identity\otimes X\otimes\identity\otimes\identity$. Alternatively, it maps to the single excitation subspace of a network of spins coupled using the exchange or Heisenberg coupling with a topology as depicted in \cref{fig:network} \cite{kay2010a}. This may be particularly amenable to short-term realisation, either in photonic systems \cite{keil2010,chapman2016} or superconducting ones \cite{li2018}. 
\end{example}

\section{Enhanced Performance}\label{sec:better_scaling}

Errors in our protocol can be ascribed to two sources. In \cref{eq:exact}, the first term describes the evolution of a state initially supported on the low energy ($\sim 0$) eigenvectors, giving exactly the term we want at first order, but higher order terms (all odd) mark deviations. The second term describes the (small) probability of being found in a high energy ($\sim\lambda_n$) eigenstate, with unwanted evolution. This second term is the leading order error, and is readily controlled by further extending the Hamiltonian,
\begin{equation}\label{eq:ham_mid}
H=\begin{bmatrix}
0 & \gamma\identity & 0 & 0 & 0 & 0 \\
\gamma\identity & 0 & \gamma\identity & 0 & 0 & 0 \\
0 & \gamma\identity & 0 & A & 0 & 0 \\
0 & 0 & A^\dagger & 0 & \gamma\identity & 0 \\
0 & 0 & 0 & \gamma\identity & 0 & \gamma\identity \\
0 & 0 & 0 & 0 & \gamma\identity & 0
\end{bmatrix}.
\end{equation}
The effective Hamiltonians
$$
h_{n\pm}=\begin{bmatrix}
    0 & \gamma & 0 \\
    \gamma & 0 & \gamma \\
    0 & \gamma & \pm\lambda_n
\end{bmatrix}
$$
have eigenvectors of energy $\sim\lambda_n$ which are supported on state $\ket{n_{1\pm}}$ with a strength reduced to order $\gamma^2$ (\cref{lem:exp}). The low energy eigenvalues are of the form
$$
\gamma\left(\pm 1\pm\frac{\gamma}{2\lambda_n}\right)+\order{\gamma^3}.
$$
We choose to evolve for a time $t=2\pi/\gamma$ to ensure that the leading $\pm 1$ terms vanish. 
 We now have that
$$
\alpha_n\propto\frac{\gamma}{\lambda_n}+\order{\gamma^3\kappa^3}.
$$
The accuracy of the final normalised state is $\gamma^2\kappa^3$; selecting $\gamma=\delta/\kappa^{3/2}$ improves the run-time to $\order{\kappa^{3/2}T/\delta}$.

Further refinements are possible: by extending $H$ so that $h_{n+}$ is an $(R+1)\times (R+1)$ matrix, the effect of the high energy sector is suppressed to $\order{\gamma^{2R}}$ (\cref{lem:exp}). Furthermore, expansion of $\alpha$ for the low-energy sector retains only odd powers of $x$ (see \cref{lem:odd}), and these coefficients are independent of $\lambda$. Thus, by selecting evolutions over multiple times, we can take a linear combination of unitaries \cite{childs2012} and cancel all errors up to $\order{\gamma^{2R}}$. Accuracy $\epsilon$ only requires $R\sim\log\frac{1}{\epsilon}$.

\begin{example}[ex:mid]
In the specific case of \cref{ex:main}, the Hamiltonian in \cref{eq:ham_mid} gives an evolution with an accuracy $\|\ket{\psi}-\ket{x}\|=2\times 10^{-4}$. This improves to $7\times 10^{-9}$ by choosing a linear combination that eliminates the $\order{\gamma^3}$ term:
\begin{center}
\begin{adjustbox}{max width=\textwidth}
$
\displaystyle\bra{6}\otimes\identity\left(\frac{2 \left(4 \pi ^2-9\right)e^{-iH\frac{\pi}{\gamma}}+(\pi ^2-9)e^{-iH\frac{2\pi}{\gamma}}}{\sqrt{405-306 \pi ^2+65 \pi ^4}}\right)\ket{1}\otimes\ket{b}.
$
\end{adjustbox}
\end{center}

\end{example}

\subsection{High-Order Corrections by Design}

We believe that a more direct method can eliminate the complexities of a linear combination of unitaries.

\begin{conjecture}\label{conjecture}
There exist coupling strengths $\{J_n\}_{n=1}^R$ such that
\begin{multline*}
H=\sum_{n=1}^R\gamma J_n\Big(\ket{n}\bra{n+1}+\ket{n+1}\bra{n}\Big)\otimes\identity_N+\\
\sum_{n=R+2}^{2R+1}\gamma J_{2R+2-n}\Big(\ket{n}\bra{n+1}+\ket{n+1}\bra{n}\Big)\otimes\identity_M\\
+\ket{R+1}\bra{R+2}\otimes A+\ket{R+2}\bra{R+1}\otimes A^\dagger
\end{multline*}
achieves an evolution
$$
(\bra{2R+2}\otimes\identity)e^{-iHt}(\ket{1}\otimes\ket{b})\propto\gamma \ket{x}+\order{\gamma^{2R}\kappa^{2R}}.
$$
\end{conjecture}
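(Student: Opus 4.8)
The plan is to mirror the block-diagonalisation already used for the $2\times 2$ and $3\times 3$ effective Hamiltonians, push it to the general chain, and then treat the resulting design problem as a finite set of cancellation conditions on the couplings $\{J_n\}$.

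First I would reduce the full evolution to a single scalar amplitude per singular value. Substituting the SVD $A\ket{\lambda_n}=\lambda_n\ket{\eta_n}$ and exploiting the chain's mirror symmetry $k\leftrightarrow 2R+3-k$, the Hamiltonian decomposes into symmetric/antisymmetric sectors, each an $(R+1)\times(R+1)$ tridiagonal Jacobi matrix
\[
h_{n\pm}=\gamma M\pm\lambda_n\ket{R+1}\bra{R+1},
\]
where $M$ has zero diagonal and off-diagonals $J_1,\dots,J_R$. Exactly as in \cref{eq:out}, the target overlap becomes $\sum_n\ket{\lambda_n}\braket{\eta_n}{b}\,\alpha_n$ with $\alpha_n=\tfrac12(\bra{1}e^{-ih_{n+}t}\ket{1}-\bra{1}e^{-ih_{n-}t}\ket{1})$, i.e. $\alpha_n$ is the part of $\bra{1}e^{-i(\gamma M+\lambda_n P)t}\ket{1}$ that is \emph{odd} in $\lambda_n$, with $P=\ket{R+1}\bra{R+1}$. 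So everything reduces to designing $M$ (equivalently $\{J_n\}$) and $t$ so that this odd part is $\propto\gamma/\lambda_n$ up to the stated order.

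Next I would separate the two energy scales. The spectrum of $h_{n\pm}$ splits into one eigenvalue near $\pm\lambda_n$ and $R$ eigenvalues of order $\gamma$. For the high-energy eigenvector, \cref{lem:exp} gives amplitude $\order{(\gamma\kappa)^R}$ on site $1$, so its contribution to $\alpha_n$ is $\order{\gamma^{2R}\kappa^{2R}}$ — precisely the error budget. For the low-energy cluster I would integrate out the site at $\pm\lambda_n$ by a Schur complement, obtaining the $R\times R$ effective generator $H^\pm_{\mathrm{eff}}=\gamma M_{LL}\mp\frac{\gamma^2 J_R^2}{\lambda_n}\ket{R}\bra{R}+\order{\gamma^3/\lambda_n^2}$. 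A single Duhamel expansion of $e^{-iH^\pm_{\mathrm{eff}}t}$ about $\gamma M_{LL}$, with $t=\tau/\gamma$, yields the leading behaviour
\[
\alpha_n^{\mathrm{low}}=iJ_R^2 K(\tau)\frac{\gamma}{\lambda_n}+\dots,\qquad K(\tau)=\int_0^\tau\bra{1}e^{-iM_{LL}(\tau-\sigma)}\ket{R}\bra{R}e^{-iM_{LL}\sigma}\ket{1}\,d\sigma,
\]
which is manifestly independent of $\lambda_n$. By \cref{lem:odd} the full low-energy series contains only odd powers of $\gamma/\lambda_n$ with $\lambda$-independent coefficients, $\alpha_n^{\mathrm{low}}=\sum_{k\ge 0}c_{2k+1}(\{J\},\tau)(\gamma/\lambda_n)^{2k+1}$.

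Finally I would impose the design equations $c_{2k+1}(\{J\},\tau)=0$ for $k=1,\dots,R-1$, while keeping $c_1\neq 0$. These $R-1$ conditions on the $R$ couplings (with $\tau$ an extra handle and one overall rescaling absorbed into the proportionality) push the low-energy residual down to $\order{(\gamma\kappa)^{2R+1}}$, which is dominated by the high-energy contribution, giving $\alpha_n\propto\gamma/\lambda_n+\order{\gamma^{2R}\kappa^{2R}}$ and hence the claimed $\propto\gamma\ket{x}+\order{\gamma^{2R}\kappa^{2R}}$. I expect the genuine difficulty — and the reason this is a conjecture rather than a theorem — to be proving that this nonlinear system in $\{J_n\}$ is always solvable over the reals with $c_1\neq 0$. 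The $R\le 2$ cases and the linear-combination-of-unitaries construction of \cref{sec:better_scaling} show that the errors \emph{can} be annihilated, but a general existence argument, presumably via the spectral theory of Jacobi matrices together with an interpolation or moment-matching argument on the coefficients $c_{2k+1}$, is the missing ingredient.
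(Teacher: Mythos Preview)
Your outline is essentially the paper's own partial progress toward this conjecture: the same SVD plus mirror-symmetry reduction to $(R{+}1)\times(R{+}1)$ tridiagonal blocks $h_{n\pm}$, the same appeal to \cref{lem:exp} for the $\order{(\gamma\kappa)^{2R}}$ high-energy suppression and to \cref{lem:odd} for the odd-only low-energy series with $\lambda$-independent coefficients, and the same identification of the genuine gap as the real solvability of the resulting nonlinear cancellation system in $\{J_n\}$. The one technical difference is that the paper passes to an eigenvalue parametrisation of $\tilde h$ (fixing its spectrum to integers and working with the end-weights $a_n$) before expanding $\alpha$ as a series in $x$, whereas you integrate out the high-energy site by a Schur complement and Duhamel expansion; both routes produce the same family of design equations, and neither supplies the missing existence argument.
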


\begin{table*}
    \centering
\begin{tabular}{cccccccc}
\toprule
$R$ & $J_1$ & $J_2$ & $J_3$ & $J_4$ & $J_5$ & $J_6$ & accuracy \\\midrule
1 & any &&&&&& $x^2$ \\
2 & 0.5 &  0.494159 &&&&& $x^4$\\
3 & 0.601912 & 0.798563 & 0.632067 &&&& $x^6$ \\
4 & 0.755581 & 0.971502 & 0.992613 & 0.713346 &&& $x^7$ \\
5 & 0.843007 & 1.1131 & 1.25361 & 1.21606 & 0.83284 && $x^9$\\
6 & 0.914772 & 1.20074 & 1.38343 & 1.53701 & 1.4816 & 0.963929 & $x^9$\\
\bottomrule
\end{tabular}
\caption{Parameters required for a given length $R$ of block-wise extension so that, with fixed eigenvalues, the evolution yields a leading order term of $x\propto\frac{1}{\lambda}$, with higher orders suppressed at time $t=2\pi/\gamma$. The $R=1$ case is the one we initially solved, and is limited by the effect of the high energy term, as are all the small $R$ cases (accuracy increases by $x^2$ with each increment of $R$), transitioning at $R=4$ (accuracy increases by $x^2$ for every second increment of $R$).}\label{tab:params}
\end{table*}

In \cref{sec:app}, we set up the framework, and a simplifying restriction, for finding the coupling strengths. For $R\leq 6$, the numerical solutions are given in \cref{tab:params}. We do not know whether a valid solution to the equations always exists, let alone the efficiency of its discovery. 

\begin{example}[ex:ultimate]
Take $R=3$. We set up a Hamiltonian
$$
H=\gamma\begin{bmatrix}
0 & J_1\identity & 0 & 0 & 0 & 0 & 0 & 0 \\
J_1\identity & 0 & J_2\identity & 0 & 0 & 0 & 0 & 0 \\
0 & J_2\identity & 0 & J_3\identity & 0 & 0 & 0 & 0 \\
0 & 0 & J_3\identity & 0 & A/\gamma & 0 & 0 & 0 \\
0 & 0 & 0 & A^\dagger/\gamma & 0 & J_3\identity & 0 & 0 \\
0 & 0 & 0 & 0 & J_3\identity & 0 & J_2\identity & 0 \\
0 & 0 & 0 & 0 & 0 & J_2\identity & 0 & J_1\identity \\
0 & 0 & 0 & 0 & 0 & 0 & J_1\identity & 0
\end{bmatrix}
$$
with coupling strengths specified in \cref{tab:params}. Using the basis $\ket{m_+}=(\ket{m}\ket{\eta_n}+\ket{9-m}\ket{\lambda_n})/\sqrt{2}$, $H$ decomposes into subspaces within which the coupling is $h_+$. The leading submatrix of $h_+/\gamma$ is
$$
\tilde h=\begin{bmatrix} 0 & J_1 & 0 \\ J_1 & 0 & J_2 \\ 0 & J_2 & 0 \end{bmatrix},
$$
designed to have eigenvalues $0,\pm 1$. We find
\begin{align*}
\alpha_n
&=0.869923 i\frac{\gamma}{\lambda}+\order{\frac{\gamma^6}{\lambda^6}}.
\end{align*}
We thus have
$$
(\bra{8}\otimes\identity)e^{-iH2\pi/\gamma}(\ket{1}\otimes\ket{b})\propto\ket{x}+\order{\gamma^5\kappa^6}.
$$

Applied to the special case of \cref{ex:main} (see \cite{kay2025b}), the final state $\ket{\psi}$ satisfies $\|\ket{\psi}-\ket{x}\|=2.6\times 10^{-13}$, demonstrating just how effectively this works. In terms of an experimental implementation, we simply update \cref{fig:network} by replacing each $\delta$-strength coupling with a chain of three edges with couplings $\delta\times\{J_1,J_2,J_3\}$ ($\delta J_3$ attaches to the bulk).
\end{example}

Assuming that \cref{conjecture} holds for any $R$, we achieve an overall accuracy $\order{\gamma^{R}\kappa^{R+1}}$. To ensure that the simulation achieves an accuracy of at least $\epsilon$, we require both that the Hamiltonian simulation needs an accuracy of $\epsilon/2$ and the current protocol does as well, i.e.\ $\epsilon\sim \gamma^{R}\kappa^{R+1}$, which requires $R\sim\log\frac{1}{\epsilon}$. As $R$ becomes large, the $\kappa$ dependence tends to the optimal \cite{harrow2009,ambainis2012}, and in terms of accuracy, we recover \cite{childs2017} although without the need for any phase estimation, linear combinations of unitaries etc.; just the simulation of a Hamiltonian evolution followed by amplitude amplification. In early, small-scale, experiments, we anticipate replacing the amplitude amplification with repeat-until-success measurements, which have a lower overhead. Thus, the entire experiment is just a Hamiltonian evolution.


\vspace{-0.125in}
\section{Conclusions}

In this paper, we have demonstrated a new version of the algorithm for matrix inversion. Without the need for phase estimation, it is far more direct that HHL. At its core is an implicit perturbative analysis of a weakly coupled system, whose effective Hamiltonian for a subspace close to energy $0$ is approximately $H^{-1}$. The simplest variant of this algorithm matches the $\order{\kappa^2}$ scaling of the original algorithm \cite{harrow2009} while delivering a massive simplification of the implementation of the algorithm, making it vastly more realistic for short-term realisation on Noisy Intermediate Scale Quantum devices. With a mild increase in sophistication, we have also matched the best performance \cite{ambainis2012,childs2017} in terms of accuracy and condition number. However, we have left open the question (\cref{conjecture}) of whether the most direct version always has solutions that can be found efficiently.

We have not investigated functions $\ket{x}\propto f(A)\ket{b}$ other than the inverse in this paper. Standard techniques such as linear combinations of unitaries \cite{childs2013,chia2020} will suffice, although it would be more satisfying if the function could be directly integrated into the evolution.

\section*{Acknowledgments}
C.T. is supported by NSF grant OSI-2427020.

%

\clearpage
\appendix
\section{Partial Progress Towards Conjecture}\label{sec:app}

As before, we consider subspaces based on the singular value decomposition of $A$. We get an effective $(R+1)\times (R+1)$ matrix
$$
h_+=\gamma\left[\begin{array}{cccccc}
0 & J_1 & & & & \\
J_1 & 0 & J_2 & & & \\
& J_2 & 0 & J_3 & & \\
& & \ddots & \ddots & \ddots & \\
& & & J_{R-1} & 0 & J_{R} \\
& & & & J_{R} & \frac{1}{x}
\end{array}\right]
$$
where $x=\frac{\gamma}{\lambda}$ is the small parameter that we will be performing a series expansion about close to 0. We will actually concentrate on a reduced case where the spectrum of $\tilde h$, the principal submatrix of $h_+/\gamma$, is fixed ($0,\pm1,\pm 2, \ldots$) \footnote{Although even powers of $x$ in $\alpha(x)$ vanish, it helps computer packages to explicitly zero the $\order{1}$ term by fixing $t=\frac{2\pi}{\gamma}$. This choice of spectrum has implications for $\|\tilde h\|$, which needs renormalising, and will therefore impact the evolution time of the Hamiltonian simulation.}, and consequently we only achieve an accuracy $\order{\gamma^R\kappa^R}$.

\begin{lemma}\label{lem:exp}
$h_+$ has an eigenvector $\ket{\lambda}$ with eigenvalue $\lambda\sim\frac{1}{x}$. There exists a positive constant $\beta$ that bounds the first element of the eigenvector by
$$
|\braket{1}{\lambda}|<\beta |x|^{R}
$$
provided $\|\tilde h\|\gamma\kappa\ll 1$.
\end{lemma}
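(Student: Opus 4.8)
The plan is to work with the rescaled matrix $M=h_+/\gamma$, whose isolated large eigenvalue $\mu\sim\frac1x$ is the object of interest (equivalently, $h_+$ carries the eigenvalue $\gamma\mu\sim\lambda$). I would write $M$ in block form, separating the principal $R\times R$ submatrix $\tilde h$ from the final site,
$$
M=\begin{bmatrix} \tilde h & J_R\ket{R}\\ J_R\bra{R} & \tfrac1x\end{bmatrix},
$$
where $\ket{R}$ denotes the last basis vector of the $R$-dimensional block. First I would pin down the location of $\mu$: the diagonal part $\mathrm{diag}(0,\dots,0,\tfrac1x)$ has the single large entry $\tfrac1x$, while the remainder has operator norm bounded by a constant $C$ depending only on the $\{J_n\}$. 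Since the promise $\lambda\kappa>1$ forces $|x|=\gamma/\lambda\le\gamma\kappa$, the hypothesis $\|\tilde h\|\gamma\kappa\ll1$ gives $\|\tilde h\|\ll 1/|x|$, so Weyl's inequality places $R$ eigenvalues of $M$ within $C$ of $0$ and exactly one eigenvalue $\mu$ within $C$ of $\tfrac1x$. In particular $\mu=\tfrac1x+\order{1}$ is simple and isolated, with $|\mu|-\|\tilde h\|\ge \tfrac1{|x|}-C-\|\tilde h\|>0$ for small $x$.

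Next I would eliminate the bulk by a Schur-complement step. Writing the eigenvector as $\ket{\lambda}=(\ket{w},c)$ with $\ket{w}$ supported on the first $R$ sites and $c$ the amplitude on the last, the eigenvalue equation $M\ket{\lambda}=\mu\ket{\lambda}$ yields $(\tilde h-\mu)\ket{w}=-J_R c\ket{R}$, hence
$$
\ket{w}=-J_R\,c\,(\tilde h-\mu)^{-1}\ket{R},
$$
which is legitimate because $\mu\notin\mathrm{spec}(\tilde h)$ by the separation just established. The quantity to bound is then $\braket{1}{\lambda}=\braket{1}{w}=-J_R c\,\bra{1}(\tilde h-\mu)^{-1}\ket{R}$.

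The heart of the argument is the Neumann expansion $(\tilde h-\mu)^{-1}=-\tfrac1\mu\sum_{k\ge0}(\tilde h/\mu)^k$, convergent since $\|\tilde h/\mu\|<1$. Here the tridiagonal, zero-diagonal structure of $\tilde h$ is decisive: $\bra{1}\tilde h^{\,k}\ket{R}$ counts weighted walks of length $k$ from site $R$ to site $1$ on the path graph, which vanishes for $k<R-1$, the first surviving term being the unique direct walk $\bra{1}\tilde h^{\,R-1}\ket{R}=J_1J_2\cdots J_{R-1}$ (the next contribution appears only at $k=R+1$, by the bipartite parity of the path). Thus $\bra{1}(\tilde h-\mu)^{-1}\ket{R}=-\mu^{-R}J_1\cdots J_{R-1}+\order{\mu^{-(R+2)}}$, and combining with the prefactor $J_R$ and with $\mu^{-1}=x(1+\order{x})$ gives $\braket{1}{\lambda}=J_1\cdots J_R\,c\,x^R(1+\order{x})$. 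Since $\|\ket{\lambda}\|=1$ forces $|c|\le1$, taking $\beta$ to be a small multiple of $|J_1\cdots J_R|$ yields $|\braket{1}{\lambda}|<\beta|x|^R$ as claimed.

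The main obstacle is the first step rather than the last: the combinatorial power-counting is routine, so the real work lies in certifying that the large eigenvalue $\mu$ exists, is simple, and sits far enough from $\mathrm{spec}(\tilde h)$ that the resolvent (equivalently the Neumann series) is controlled uniformly in $x$. This is precisely where the hypothesis $\|\tilde h\|\gamma\kappa\ll1$ enters, guaranteeing the scale separation $\|\tilde h\|\ll1/|x|$ between the bulk and the isolated high-energy level.
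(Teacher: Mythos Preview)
Your proposal is correct and follows essentially the same route as the paper: block decomposition separating the first $R$ sites from the last, inversion via the resolvent $(\tilde h-\mu)^{-1}$, Neumann expansion, and the tridiagonal walk-length argument to kill the terms with $k<R-1$. Your version is in fact a bit more careful than the paper's sketch---you invoke Weyl's inequality to locate the isolated eigenvalue rather than just citing Kato, and you track the explicit leading coefficient $J_1\cdots J_R$---but the core mechanism is identical.
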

\begin{proof}
The existence of the eigenvector is justified in perturbation theory by Kato \cite{kato1995a}. We try to solve for the unnormalised vector $\ket{\lambda}=\begin{bmatrix}\ket{\lambda_L} \\ 1 \end{bmatrix}$,
$$
\tilde h\ket{\lambda_L}+\ket{R}=\lambda\ket{\lambda_L}.
$$
We rearrange this as
$$
\braket{1}{\lambda_L}=\bra{1}(\lambda\identity-\tilde h)^{-1}\ket{R}.
$$
If we expand the inverse as a series in small $\frac{\|\tilde h\|}{\lambda}$, we have
$$
\braket{1}{\lambda_L}=\sum_{k=0}^{\infty}\frac{1}{\lambda^{k+1}}\bra{1}\tilde h^k\ket{R}.
$$
For values $k<R-1$, $\bra{1}\tilde h^k\ket{R}=0$ due to the chosen tridiagonal structure. We can thus bound
$$
|\braket{1}{\lambda_L}|\leq \sum_{k=R-1}^{\infty}\frac{1}{\lambda^{k+1}}\|\tilde h\|^k=\frac{1}{\lambda^{R}}\frac{\|\tilde h\|^{R-1}}{1-\frac{\|\tilde h\|}{\lambda}}.
$$
\end{proof}

The larger system size, irrespective of the choice of couplings, suppresses the effects any the high energy terms by an amount $\braket{1}{\lambda_L}^2\sim x^{2R}$.

In this case, our initial form of $h_+$ is not so convenient. Instead, we perform a similarity transform, making use of the fact that a tridiagonal matrix is uniquely specified by its eigenvalues and the weight of the corresponding eigenvectors on a site at one end ($a_n$). Moreover, for our chosen structure, $a_{R-n}=a_n$
$$
h_+\sim\begin{bmatrix}
    \frac{R}{2} & 0 & 0 & \ldots & 0 & 0 & J_R^2a_1 \\
    0 & \frac{R}{2}-1 & 0 & &0 & 0 & J_R^2a_2 \\
    0 & 0 & \ddots&&&&\vdots\\
    \vdots \\
    0 & 0 &&&0 & -\frac{R}{2}  & J_R^2a_1 \\
    1 & 1 & 1 & \ldots & 1 & 1 & \frac{1}{x}
\end{bmatrix}
$$
subject to the constraints
$$
\sum_na_n=1,\qquad a_n>0.
$$
The amplitude that we want to approximate is
$$
\alpha=\frac12\left(\bra{1}e^{-ih_+t}\ket{1}-\bra{1}e^{ih_+t}\ket{1}\right).
$$
Our goal, as before is to have the leading order proportional to $x$, but now to cancel as many higher order terms as possible.

With this parametrisation in place, we can solve for the eigenvalues $\phi_n$ of $h_+$ that are close to the initial eigenvalues of $(0,\pm 1,\pm 2,\ldots)\gamma$ as a series in $x$
$$
\phi_n=\frac{R}{2}+1-n+\sum_{k=1}^{\infty}\beta_{n,k}x^k.
$$
In practice, we only need the $\beta_{n,k}$ up to $k=R$ ($2R$ in the full case), which can be found iteratively.

The structure of the matrix $h_+$ has some immediate implications for the structure of the eigenvalues and eigenvectors. Specifically,
\begin{lemma}
$$
\beta_{n,k}=(-1)^{k+1}\beta_{R+1-n,k}
$$
\end{lemma}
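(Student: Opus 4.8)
The plan is to derive the whole identity from one exact symmetry of the tridiagonal generator, so I will work with $h_+$ in its original tridiagonal form (before the similarity transform). Write $g(x)=h_+/\gamma$ for the $(R+1)\times(R+1)$ tridiagonal matrix whose off-diagonal entries are $J_1,\dots,J_R$ and whose diagonal is $(0,\dots,0,\tfrac1x)$, so that $\{\phi_n(x)\}$ is exactly its spectrum (eigenvalues are similarity-invariant, so the arrowhead form may be ignored). Let $D=\operatorname{diag}(d_1,\dots,d_{R+1})$ with $d_j=(-1)^{j-1}$. Since $(DgD^{-1})_{ab}=d_ad_bg_{ab}$, conjugation flips the sign of every nearest-neighbour coupling ($d_id_{i+1}=-1$) while leaving every diagonal entry untouched ($d_j^2=1$). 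Thus $Dg(x)D^{-1}$ has off-diagonals $-J_i$ and diagonal $(0,\dots,0,\tfrac1x)$, which is precisely $-g(-x)$, because $x\mapsto-x$ sends the lone diagonal entry $\tfrac1x$ to $-\tfrac1x$ and the overall sign restores it. This gives the exact relation
$$
Dg(x)D^{-1}=-g(-x),
$$
valid for all couplings and independently of the parity of $R$.

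The identity immediately yields $\operatorname{spec}(g(-x))=-\operatorname{spec}(g(x))$. To lift this to the individually labelled branches, I use that $\tilde h$ has vanishing diagonal, so its spectrum is symmetric about $0$; with the branches labelled in decreasing order of their leading eigenvalue $e_n$ (the convention of the stated expansion), symmetry of the multiset forces $e_{R+1-n}=-e_n$. Because these leading eigenvalues are distinct, each low-energy branch $\phi_n(x)$ is a simple analytic function of $x$ near $0$ (Kato \cite{kato1995a}), and the high-energy branch $\phi\sim\tfrac1x$ is fixed by the symmetry (as $\tfrac1x\mapsto-\tfrac1x$), so it never mixes into the low-energy sector. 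The spectral negation therefore permutes the low-energy branches among themselves, and matching leading orders pins the permutation to the involution $n\mapsto R+1-n$:
$$
\phi_n(-x)=-\phi_{R+1-n}(x).
$$

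It then remains only to substitute $\phi_n(x)=e_n+\sum_{k\ge1}\beta_{n,k}x^k$ into both sides. At order $x^k$ the left-hand side contributes $(-1)^k\beta_{n,k}$ and the right-hand side contributes $-\beta_{R+1-n,k}$ (the constant terms already agree, being $e_n=-e_{R+1-n}$). Equating coefficients gives $(-1)^k\beta_{n,k}=-\beta_{R+1-n,k}$, i.e. $\beta_{n,k}=(-1)^{k+1}\beta_{R+1-n,k}$, as claimed. A pleasant byproduct, relevant to the ``only odd powers survive'' behaviour, is that a self-paired central branch (present when $R$ is odd, with $e_n=0$ and $n=R+1-n$) must satisfy $\beta_{n,k}=(-1)^{k+1}\beta_{n,k}$, hence $\beta_{n,k}=0$ for every even $k$.

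I expect the only delicate point to be the branch bookkeeping of the second paragraph: one must be certain that the negation neither exchanges a low-energy branch with the high-energy one nor realises a pairing other than $n\mapsto R+1-n$. Both facts are structural rather than quantitative -- the first follows from the separation of the $\order{1/x}$ eigenvalue from the $\order{1}$ cluster, and the second from the antisymmetry $e_{R+1-n}=-e_n$ together with non-degeneracy of $\tilde h$ -- so no convergence estimate beyond the analyticity already invoked in \cref{lem:exp} is needed, and the coefficient matching in the final step is then purely formal.
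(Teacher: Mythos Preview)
Your proof is correct and follows essentially the same route as the paper: both introduce the diagonal sign matrix $D=\sum_n(-1)^{n+1}\proj{n}$ and derive the key identity $Dh_+(x)D=-h_+(-x)$, from which the spectral relation $\phi_n(-x)=-\phi_{R+1-n}(x)$ and hence the claimed coefficient identity follow. You are simply more explicit than the paper about the branch-matching step, arguing from the antisymmetry $e_{R+1-n}=-e_n$ of the unperturbed spectrum of $\tilde h$ and the separation of the high-energy branch, whereas the paper leaves this pairing implicit.
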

\begin{proof}
Consider $h_+$ as a function of $x$, i.e.\ $h_+(x)$. If we introduce the matrix $D=\sum_{n=1}^{R+1}(-1)^{n+1}\proj{n}$, then
$$
h_+(x)=-Dh_+(-x)D.
$$
The existence of an eigenvector $\ket{\phi(x)}$ of eigenvalue $\phi(x)$ means that there is also an eigenvector $D\ket{\phi(-x)}$ of eigenvalue $-\phi(-x)$. In the eigenvalue, these signs flip the sign of all even powers of $x$. Similarly, $b_n(x)=\braket{R+1}{\phi_n}^2=b_{R+1-n}(-x)$.    
\end{proof}
\noindent Consequently, many of the coefficients of $\alpha$ are 0.
\begin{lemma}\label{lem:odd}
$\alpha$ is an odd function of $x$.
\end{lemma}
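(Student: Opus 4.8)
The plan is to exploit the matrix symmetry established in the proof of the preceding lemma, namely $h_+(x) = -D h_+(-x) D$ with $D = \sum_{n=1}^{R+1}(-1)^{n+1}\proj{n}$, together with two elementary facts about $D$: first that it is a real diagonal involution, so $D^2 = \identity$ and $D = D^\dagger$; and second that the relevant boundary site is fixed, $D\ket{1} = \ket{1}$, since the sign attached to $n=1$ is $(-1)^{1+1} = +1$. I would aim to show directly that $\alpha(-x) = -\alpha(x)$.

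First I would conjugate the symmetry relation by $D$ and use $D^2 = \identity$ to rewrite it as $h_+(-x) = -D h_+(x) D$. Because $(D h_+(x) D)^k = D\,h_+(x)^k\,D$ term by term, this lifts to the propagators, giving $e^{-ih_+(-x)t} = e^{\,iD h_+(x) D t} = D\,e^{\,ih_+(x)t}\,D$ and, with the opposite sign, $e^{\,ih_+(-x)t} = D\,e^{-ih_+(x)t}\,D$. I would then evaluate the amplitude $\alpha = \tfrac12\left(\bra{1}e^{-ih_+t}\ket{1} - \bra{1}e^{ih_+t}\ket{1}\right)$ at $-x$, substitute these two identities, and use $D\ket{1} = \ket{1}$ (and $\bra{1}D = \bra{1}$) to collapse the flanking $D$ factors. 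This yields $\bra{1}e^{-ih_+(-x)t}\ket{1} = \bra{1}e^{\,ih_+(x)t}\ket{1}$ and $\bra{1}e^{\,ih_+(-x)t}\ket{1} = \bra{1}e^{-ih_+(x)t}\ket{1}$, so the two terms of $\alpha(-x)$ are precisely the two terms of $\alpha(x)$ with their roles swapped, whence $\alpha(-x) = -\alpha(x)$, which is the claim.

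The only point requiring care is the step $D\ket{1} = \ket{1}$: the whole argument hinges on the extracted boundary site carrying $D$-eigenvalue $+1$, which it does under the convention $(-1)^{n+1}$ at $n=1$, so the sandwiching $D$'s vanish cleanly rather than contributing a spurious sign. Everything else is routine — the exponential conjugation identity is immediate from $D^2 = \identity$ in the power series, and Hermiticity of $h_+$ gives the incidental observation that $\alpha$ is purely imaginary, a useful consistency check. I do not anticipate a genuine obstacle here; the content is carried entirely by the involution symmetry of the preceding lemma, and the result explains why only the odd powers of $x$ survive in $\alpha$, as asserted in the body of the paper and relied upon for the linear-combination-of-unitaries cancellation.
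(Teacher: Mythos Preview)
Your argument is correct and is, in fact, cleaner than the proof given in the paper. Both proofs rest on the same conjugation identity $h_+(x)=-D\,h_+(-x)\,D$ established in the preceding lemma, but they exploit it differently. The paper passes through the spectral decomposition of $h_+$: it parametrises each low-energy eigenvalue and the associated weight on site $R{+}1$ as an even-plus-odd split in $x$, pairs the index $n$ with $R{+}1{-}n$ using the previous lemma, and then performs a trigonometric reduction of $\sum_n b_n\sin(t\phi_n)$ to exhibit each surviving term as a product with exactly one odd factor. You bypass all of this by lifting the conjugation directly to the propagator via $e^{-ih_+(-x)t}=D\,e^{\,ih_+(x)t}D$ and using $D\ket{1}=\ket{1}$ to collapse the sandwiching $D$'s; the two summands in $\alpha$ then swap under $x\mapsto -x$, giving $\alpha(-x)=-\alpha(x)$ in one line.

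What your route buys is economy and robustness: there is no need to track eigenvalues, eigenvector weights, or the separate high-energy branch, and no appeal to any series expansion in $x$. What the paper's route buys is a little extra structural information along the way---the explicit even/odd parities of the individual $\beta_{n,k}$ and $b_n$---which it then reuses when setting up the coefficient-cancellation equations for Table~I. Your proof establishes the lemma as stated without that overhead; if you later need the finer per-eigenvalue parities, they follow from the same $D$-conjugation applied eigenvector by eigenvector, as in the preceding lemma.
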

\begin{proof}
    Let us write $b_n=C+xD$ and $\eta_n=A+xB$ where $A,B,C,D$ are even functions of $x$. We have $b_{R+1-n}=C-xD$ and $\phi_{R+1-n}=-A+xB$. The coefficient $\alpha$ is calculated by
    \begin{align*}
    \alpha&=\frac12\sum_n b_n\sin(t\phi_n)+b_{R+1-n}\sin(t\phi_{R+1-n}) \\
    &=\sum_n C\sin(x B t)\cos(At)+xD\sin(At)\cos(xBt).
    \end{align*}
    Exactly one of each term in each product is odd, with the others being even. Hence, the overall function is odd.
\end{proof}

Note that given the eigenvalues $\phi_n$, we can calculate the one additional eigenvalue $\frac{1}{x}-\sum_n\phi_n$. This must be an odd function of $x$. We can also evaluate the $b_n$ as required for calculating $\alpha$:
$$
b_n\propto\frac{1}{p'(\eta_n)q(\phi_n)}
$$
where $q(z)$ is the characteristic polynomial of $\tilde h$ (whose eigenvalues we already know, having fixed them at the start) and $p'(z)$ is the derivative of the characteristic polynomial of $h_+$.

If we now find the expansion of $\alpha$ in terms of $x$, then there are terms $x^3$, $x^5$, \ldots $x^{2\lceil R/2\rceil+3}$ that can potentially be zeroed using the $\lceil R/2\rceil+1$ parameters $a_n$ and $J_R$. Having control of the eigenvalues gives another $\lfloor R/2\rfloor$ parameters. By doing this separately for each power of $x$, the identity will work for all possible $\lambda$, the singular values of $A$. In \cref{tab:params}, we list some example values that work for modest system sizes. 

\end{document}